\begin{document}
\title{Optimal Scheduling  of Reliability-Constrained   Relaying  System under Outdated CSI in the Finite Blocklength Regime }
\author{Yulin Hu \emph{IEEE Member},  Anke Schmeink,  \emph{IEEE Member} and James Gross, \emph{IEEE Senior Member}}

\maketitle

\begin{abstract}
Under the assumption of outdated channel state information (CSI) at the source, we consider the finite blocklength (FBL) throughput of a two-hop relaying system. 
Previous work has considered this setting so far only for the infinite blocklength case, where decoding can be arbitrarily reliable as long as operating below the Shannon limit. 
In contrast, in the FBL regime residual decoding errors can not be avoided even when transmitting below the Shannon limit.
This makes the scheduling problem at the source more vulnerable to transmission errors, where we investigate the trade-off between the choice of so called scheduling weights to avoid transmission errors and the resulting coding rate. 
We show that the corresponding maximization of the throughput under a reliability constraint can be solved efficiently by iterative algorithms. 
Nevertheless, the optimal solution requires a recomputation of the scheduling weights prior to each transmission. 
Thus, we also study heuristics relying on choosing the scheduling weights only once. 
Through numerical analysis, we first provide insights on the structure of the throughout under different scheduling weights and channel correlation coefficients. 
We then turn to the comparison of the optimal scheduling with the heuristic and show that the performance gap between them is only significant for relay systems with high average signal-to-noise ratios (SNR) on the backhaul and relaying link.  
In particular, the optimal scheduling scheme provides most value in case that the data transmission is subject to strict reliability constraints, justifying the significant additional computational burden.
\end{abstract}

\begin{IEEEkeywords}
 decode-and-forward, finite blocklength, optimal scheduling, outdated CSI, relaying.
\end{IEEEkeywords}

\IEEEpeerreviewmaketitle

\section{Introduction}

In wireless communications,  relaying~\cite{Laneman_2004,  Karmakar_2011, Li_2014} is 
well known as an efficient way to mitigate fading by exploiting spatial diversity and providing better channel quality.  
Specifically, two-hop decode-and-forward (DF) relaying protocols significantly improve the throughput and quality of service~\cite{Wendong_2011, Yulin_2011,   Hu_2015_effective_capacity,Bhatnagar_2013}.  
{However, typically these studies on the advantage of relaying are under the ideal assumption of communicating arbitrarily reliable at Shannon's channel capacity, i.e.,  code words are assumed to be infinitely long.}

In the finite blocklength regime, the data transmission is no longer arbitrarily reliable. Especially when the blocklength is short, the error probability (due to noise) becomes significant even if the rate is selected below the Shannon limit. 
Taking this into account, an accurate approximation of the achievable coding rate under the finite blocklength assumption for an additive white Gaussian noise (AWGN) channel was derived in~\cite{Verdu_2010} for a single-hop transmission system.  
Subsequently, the initial work for AWGN channels was extended to Gilbert-Elliott channels~\cite{Polyanskiy_2011} as well as quasi-static fading channels~\cite{Yang_2014,Gursoy_2013,Gursoy_2013_2,Peng_2011,Makki_2014,Xu_2016}. 
 It is shown in these works 
 that the finite blocklength performance of a single-hop transmission is determined by the  coding rate,  error probability and blocklength. 
 In particular, the performance loss due to the additional decoding errors at finite blocklength is considerable and increases as the blocklength decreases.  
 Also, if the channel and the blocklength are given, the error probability of the single-hop transmission is strictly increasing in the coding rate.
In our own previous work~\cite{Hu_letter_2015,Hu_2015,Hu_TWC_2015}, we extended   Polyanskiy's model~\cite{Verdu_2010} of single-hop transmission  to  a two-hop DF relaying network, where the relay halves the distance to provide a power gain but at the same time also halves the blocklength of the transmission. 
Subsequently, we provided a general analytical model of the finite blocklength performance under static/quasi-static channels in~\cite{Hu_letter_2015,Hu_2015,Hu_TWC_2015} while assuming the transmitter to have only average CSI. 
More recently, the throughput  of a relaying network with finite blocklengths and queuing constraints was studied in~\cite{Li_2016_ISIT} under the perfect CSI assumption. 

In practical relay systems (as for instance specified by the LTE standard) CSI feedback mechanisms are usually implemented, i.e., allowing the receiver to instantaneously estimate and feedback the CSI to the transmitter.   
However, typically there exists a delay between the instant of sampling the channel and the point in time when this CSI sample is received by the transmitter making the CSI feedback delayed and therefore outdated. 
The performance analysis and optimization of relaying systems operating on outdated CSI have been widely discussed in the infinite blocklength (IBL) regime. 
In~\cite{Hyadi_TVT_2015}, the probability of an outage event (defined as the event when the coding rate is higher than the Shannon capacity) of a DF relaying network is studied under the outdated CSI relaying scenario. 
Protocols are designed in~\cite {Jiang_TWC_2016} for a relay system operating based on outdated CSI to optimally trade-off outage, delay, and throughput.  
For multi-relay scenarios with outdated CSI, optimal relay selection algorithms~\cite{Fei_IET_2016, Michalopoulos_2016} are proposed to minimize the outage probability. 
However, these works generally ignore the impact of transmitting under finite blocklength restrictions, which introduces further subtleties in addition to the imperfect channel knowledge.


In this paper, we thus study the finite blocklength performance of a relaying network assuming the source to have only outdated CSI. 
Different from the average CSI scenario considered in~\cite{Hu_2015,Hu_letter_2015,Hu_TWC_2015},  based on the provided CSI the source is able to adjust the coding rate per frame. 
However, due to the outdated CSI, scheduling the coding rate in this way can result in more frequent transmission errors. 
We hence study the optimal scheduling of the coding rate in the relay system with outdated CSI under a reliability constraint for the data transmission. 
As objective function we focus on the maximization of the FBL throughput.
Solving this optimal scheduling problem requires the source to choose the coding rate based on scheduling weights, i.e. factors by which the outdated channel SNRs are rescaled.  
We contribute by first deriving a model for the FBL throughput of the relaying system operating with outdated CSI.
Next, we propose an optimal scheduling scheme that maximizes the FBL throughput. 
We show that the objective function of the scheduling problem is concave in the coding rate and quasi-concave in the scheduling weights. 
Therefore, the optimal scheduling problem can be solved efficiently by iterative methods.
Nevertheless, to mitigate the computational complexity, we also consider a sub-optimal scheduling scheme, where   fixed scheduling weights are applied per frame.
We refer to this scheme as constant heuristic and study the problem of choosing the constant scheduling weights which maximize the average FBL throughput over time. 
We finally perform numerical evaluations and show that the optimal scheme outperforms the best constant heuristic, especially when the reliability constraint is strict and/or the average SNR is high. 
Surprisingly, we find that the channel correlation has only a marginal impact on the performance gap between the two schemes.
 
The rest of the paper is organized as follows. Section~\ref{sec:Preliminaries} describes the system model and briefly reviews the background regarding the finite blocklength regime. 
In Section~\ref{sec:Physical_layer_P}, we first derive the finite blocklength performance model of the considered relaying scenario with outdated CSI. 
Afterwords, we state the optimization problem of interest and provide the theoretical insights that lead to the optimal solution. 
We then turn to the constant heuristic and provide an optimal solution for choosing the fixed scheduling weight. 
In Section~\ref{sec:simulation} we then present our numerical results. 
Finally, we conclude our work in Section~\ref{sec:Conclusion}.

\section{System Model}
\label{sec:Preliminaries}
%
We consider a  straightforward  scenario with a source $\rm{S}$, a destination~$\rm{D}$ and a  relay~$\rm{R}$ as schematically shown in Figure~\ref{system-instr}.
\begin{figure}[h]
\centering{\includegraphics[width=100mm, trim=10 10 20 15]{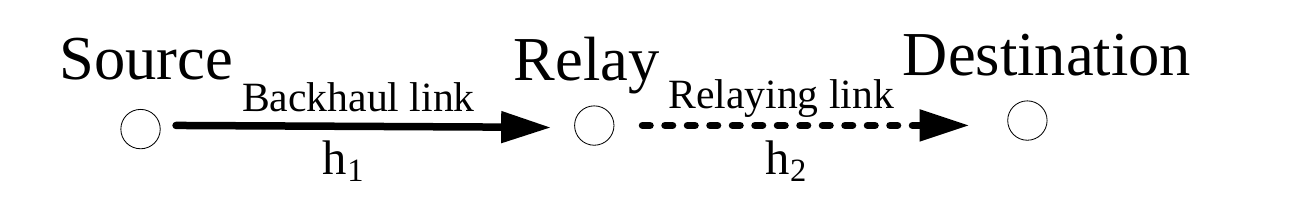}}
\caption{Example of the considered DF relaying scenario.}
\label{system-instr}
 
\end{figure}
The relay is assumed to work under a DF principle. 
The entire system operates in a slotted fashion  where time is divided into frames of length $n+2m$ symbols,  as shown in  Figure~\ref{system-instr2}. 
\begin{figure}[h]
\centering{\includegraphics[width=110mm, trim=15 5 25 10]{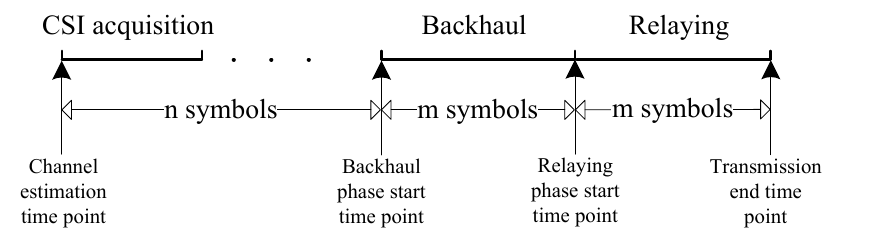}}
\caption{Illustration of the relationship of channel estimation and data transmission within the considered frame time.}
\label{system-instr2}
\end{figure}
Each frame consists of two parts, the initialization part and the transmission part. 
For the initialization part a certain amount of symbols are spent for acquiring the CSI. 
During this part, messages are exchanged to essentially obtain the CSI of the backhaul and relaying link at the source node.
We assume this part to have a duration of $n$ symbols, without specifying closer the exact system operation.
The second part of each frame is the transmission part containing two phases, which are the backhaul phase (of length $m$) and the relaying phase (of length $m$).
During the backhaul phase, the source sends a data block to the relay. Then, if the relay decodes the block successfully,  it forwards the block to the destination in the subsequent relaying phase.
Overall, we assume a setting where the initialization part takes a significantly longer amount of time than a single data transmission phase, i.e. $n > m$.
This is motivated for example by rather short (but important) payload packets for which a reliable transmission is crucial, which justifies the acquisition of the CSI upfront.

Channels are assumed to experience a time-varying Rayleigh-distributed random fading. As both the backhaul phase and the relaying phase are short, we assume that the channel state is constant during each phase. 
However, the channel states in different frames are assumed to be independent.
Considering a frame~$i$, the channel's complex states of the backhaul link and the relaying link are denoted by $h_{1,i}$ and $h_{2,i}$ and are assumed to be independent and identically distributed  (i.i.d.). 
The received SNR at the relay of the backhaul phase and the received SNR at the destination of the relaying phase are denoted by ${\gamma _{1,i}}$ and ${\gamma _{2,i}}$.  Hence, we have ${\gamma _{k,i}}= \bar \gamma_k {h^2_{k,i}}, k = 1,2$, where  $ \bar \gamma_k$ is the average SNR of link~$k$ (either the backhaul link or the relaying link). 
Recall that we assume the source to acquire the instantaneous CSI by sampling the channel $n$ symbols prior to the backhaul phase and $n+m$ symbols prior to the relaying phase. 
Thus, due to the time-varying nature of the fading, the sampled channel coefficients, denoted by  ${\hat h _{k,i}}, k =1,2$,  differ from the actual instantaneous channel coefficients $h_{k,i}$  that the data packet will experience. We adopt the widely-used Jakes model for the relation between ${\hat h _{k,i}}$ and $h_{k,i}$~\cite{Mallik_2003,Vicario_2006}:
\begin{equation}
\label{eq:Relationship_original}
{h_{k,i}} = {\rho _k}{\hat h_{k,i}} + \sqrt {1 - {\rho _k}^2} {e_{k,i}},
\end{equation}
where ${e_{k,i}}$ is a complex Gaussian random variable, i.e., ${e_{k,i}} \sim  {\mathcal {CN}} (0,1)$. In addition,   $\rho _k , k = 1,2$ are channel correlation coefficients. 
Taking the frame sequence into account, we thus obtain ${\rho _1} = {J_0}(2\pi {f_{{\rm{S - R}}}}n)$ and ${\rho _2} = {J_0}(2\pi {f_{{\rm{R - D}}}}(n + m))$, where ${f_{{\rm{S - R}}}}$ and  ${f_{{\rm{R - D}}}}$ stand for the Doppler frequency experienced on the backhaul link and the relaying link. In addition,  $J_0(\cdot)$ denotes the zero-order Bessel function of the first kind~\cite{Bessel_2005}. Based on the outdated CSI ${\hat h} _{k,i}$, the outdated SNRs are given by ${\hat\gamma _{k,i}}=  \bar \gamma_k {{\hat h}^2 _{k,i}}, k =1,2$.
Thus, the instantaneous channel SNRs ${\gamma _{k,i}}$  become now random variables conditioned on the outdated SNRs ${\hat \gamma _{k,i}}$. 
The conditional probability density function (PDF) of the instantaneous SNRs of link $k$ during frame $i$ thus results to~\cite{Mallik_2003}:
\begin{equation} 
\label{eq:Bessel_probability}
\mathbb{P}\left[ \gamma _{k,i} | \hat{\gamma} _{k,i}\right]= \frac{{\exp ( - \frac{{{\gamma _{k,i}} + \rho _k^2{{\hat \gamma }_{k,i}}}}{{{{\bar \gamma }_{k}}(1 - \rho _k^2)}})}}{{{{\bar \gamma }_{k}}(1 - \rho _k^2)}} \cdot {I_0}\left( {\frac{{2{\rho _k}\sqrt {{\gamma _{k,i}}{{\hat \gamma }_{k,i}}} }}{{{{\bar \gamma }_{k}}(1 - \rho _k^2)}}} \right),
\end{equation}
where $I_0$ is the zero-order modified Bessel function of the first kind.
We further denote by ${\bar {\bar \gamma} _{k,i}}$ the median of the instantaneous SNR ${\gamma _{k,i}}$, for which the following equation holds:
 \begin{equation}
\label{eq:Relationship_half_probability}
\int\nolimits_0^{{\bar {\bar \gamma} _{k,i}}}\!\! \!\! \!\! {\mathop\mathbb{P}\left[ \gamma _{k,i} | \hat{\gamma} _{k,i}\right]d} {\gamma _{k,i}} = \int\nolimits_{{ \bar {\bar \gamma}_{k,i}}}^{ + \infty }\!\! \!\! \!\!  {\mathop\mathbb{P}\left[ \gamma _{k,i} | \hat{\gamma} _{k,i}\right]d} {\gamma _{k,i}} =0.5 \ .
\end{equation}
Due to~\eqref{eq:Relationship_original} the median of the distribution of the instantaneous channel ${h_{k,i}}$ is ${\rho _k}{\hat h_{k,i}}$, thus we have ${\bar {\bar \gamma} _{k,i}}  \approx \rho^2 _k {  \gamma} _{k,i}$.

\subsection{Finite Blocklength Error Model under Perfect CSI} 
\label{sec:FBL_one_hop}
For  the real additive white Gaussian noise (AWGN) channel~[8, Theorem 54] derives an accurate approximation of the
coding rate of a single-hop transmission system. With blocklength $m$, block error probability $\varepsilon$ and SNR $\gamma$, the coding rate (in bits per channel use) is given by
 $r \approx \frac{1}{2}{\log _2}\left( {1 + \gamma } \right) - \sqrt {\frac{V_{\text{real}}}{m}} {Q^{ - 1}}\left( \varepsilon  \right)$, where~$Q^{ - 1}(\cdot)$ is the inverse of the Q-function given by $Q\left( w \right) = {\rm{ }}\int_w^\infty  {\frac{1}{{\sqrt {2\pi } }}} e^{ - t^2 /2} dt$. In addition, $V_{\text{real}}$ is the  channel dispersion of a real Gaussian channel which is given by $V_{\text{real}} = \frac{\gamma }{2}\frac{{\gamma  + 2}}{{{{\left( {1 + \gamma } \right)}^2}}}{\left( {{{\log }_2}e} \right)^2}$.
 
{{Under a quasi-static fading channel model, each channel state is assumed to be static during a frame, i.e., in each frame a quasi-static fading channel with fading coefficient $h$ can be viewed as an AWGN channel with channel gain $|h|^2$.}} 
Therefore, the above result of the real AWGN channel has been extended to a complex quasi-static fading channel model\cite{Yang_2014,Gursoy_2013,Gursoy_2013_2,Peng_2011,Makki_2014}: For a received SNR~$\gamma$, the coding rate of a frame (in bits per channel use) is given by: 
\begin{equation}
\label{eq:single_link_data_rate}
r =  {\mathcal{R}}({\gamma},\varepsilon ,m)  \approx {\mathcal{C}}({\gamma}) - \sqrt {\frac{{{V_{{\rm{comp}}}}}}{m}} {Q^{ - 1}}\left( \varepsilon  \right)\mathrm{,}
\end{equation}
where ${\mathcal{C}}\left( \gamma \right)$ is the  Shannon capacity function of a complex channel with received SNR~$\gamma$ : $ {\mathcal{C}}(\gamma) \!=\! {\log _2}\left( 1 + \gamma \right)$.  
{{In addition,  the  channel dispersion of a complex Gaussian channel is  twice the one of a real Gaussian channel: ${V_{{\rm{comp}}}} = 2{V_{{\rm{real}}}} = \gamma \frac{{\gamma  + 2}}{{{{\left( {1 + \gamma } \right)}^2}}}{\left( {{{\log }_2}e} \right)^2} = \left( {1 - \frac{1}{{{{\left( {1 + \gamma } \right)}^2}}}} \right){\left( {{{\log }_2}e} \right)^2} $.}}

Then,  for a single-hop transmission under a quasi-static fading channel,  
with blocklength $m$ and coding rate $r$, 
 the decoding (block) error probability at the receiver is given by:  
\begin{equation}
\label{eq:single_link_error_pro}
\varepsilon  = {\mathcal{P}}(\gamma,r,m)  \approx Q\left( {\frac{{{\mathcal{C}}(\gamma) - r}}{{\sqrt {{V_{{\rm{comp}}}}{\rm{/}}m} }}} \right).
\end{equation}
{{Considering the channel fading, the expected/average error probability 
 is given by~\cite{Yang_2014}:
\begin{equation}
\label{eq:expectedsingle_link_error}
\mathop {\mathop{\mathbb{E}}\nolimits} \limits_{\gamma} \left[ \varepsilon  \right] = \mathop {\mathop{\mathbb{E}}\nolimits} \limits_{\gamma} \left[ {{\mathcal{P}}(\gamma,r,m)} \right]  \approx \mathop {\mathop{\mathbb{E}}\nolimits} \limits_{\gamma} \left[ {Q\left( {\frac{{{\mathcal{C}}(\gamma) - r}}{{\sqrt {{V_{{\rm{comp}}}}{\rm{/}}m} }}} \right)} \right].
\end{equation}

In the remainder of the paper, we investigate the considered relaying system in the finite blocklength regime by applying the above approximations. As these approximations have been shown to
be accurate for a sufficiently large blocklength $m$~\cite{Verdu_2010}, for simplicity we will assume them to hold in equality in our analysis 
 and numerical evaluation conditioned on the assumption of a sufficiently large value of $m$ at each hop. 

\section{Maximizing the FBL Throughput under Reliability Constraints}
\label{sec:Physical_layer_P}
As discussed in the previous section, the source has outdated channel state information that it can rely on for scheduling the data transmission in the relay system. 
In this section, we address thus the problem of how to optimally schedule the coding rate based on the inaccurate outdated CSI such that the throughput of the relay system is maximized. 
We restrict this scheduling problem to a reliability constraint such that for each data transmission a target error probability $\varepsilon_{\rm th}$ must be met.
Such a scheduling problem is justified by current discussions around industrial wireless communication systems, where small payload packets need to be transmitted within a bounded time interval while keeping a (stochastic) reliability guarantee. 
In the following, we first develop a throughput model of the relaying system with respect to the finite blocklength assumption, building on Section~\ref{sec:FBL_one_hop}.
Subsequently, the mathematical statement of the optimization problem is provided. 
We then turn to the solution, providing both an optimal solution as well as a low-complexity heuristic.

 \subsection{FBL Throughput Model for Relay Systems}

Assuming $r_i$ is the scheduled coding rate\footnote{Note that only a single coding rate is scheduled for both links per frame.} for frame $i$ with instantaneous SNRs $\gamma_{1,i}$ and $\gamma_{2,i}$, the overall error probability of the relaying system during frame $i$ is: 
\begin{equation}
\label{eq:Overall_error_pro}
\begin{split}
{\varepsilon _{{\rm{R}},i}}({r_i}) &= 1 - (1 - {\varepsilon _{1,i}})(1 - {\varepsilon _{2,i}})\\
& =  {\varepsilon _{1,i}} + {\varepsilon _{2,i}} - {\varepsilon _{1,i}}{\varepsilon _{2,i}}\mathrm{,}
\end{split} 
\end{equation} 
where $ \varepsilon _{k,i}= {\mathop{\mathcal{P}}} (\gamma_{k,i},r_i,m), i=1 ,2$.
Based on~\eqref{eq:Overall_error_pro}, we immediately have the expected overall error probability conditioned on the outdated CSI $\hat{\gamma}$.
It is the expected value of~\eqref{eq:Overall_error_pro} over the conditioned channel fading distribution: 
\begin{equation}
\label{eq:Overall_error_pro_overchannelfading}
{\bar \varepsilon _{{\rm{R}},i}}({r_i}) = {\bar \varepsilon _{1,i}} + {\bar \varepsilon _{2,i}} - {\bar \varepsilon _{1,i}}{\bar \varepsilon _{2,i}} \mathrm{.}
\end{equation}
In~\eqref{eq:Overall_error_pro_overchannelfading}, $\bar \varepsilon _{k,i} $, $k=1,2$ are the expected error probabilities of either the backhaul link or the relaying link.  
\begin{table*}[t]
\begin{equation} 
\label{eq:single_expected_error_prob}
\begin{split}
\! \! \! \! {{\bar \varepsilon }_{k\!,i}} &\!=\! \int\limits_0^{ + \infty } \!{\mathbb{P}\left[ \gamma _{k,i} | \hat{\gamma} _{k,i}\right]{\mathcal{P}}({\gamma _{\!k\!,i}},\!{r_i},\!m)d} {\gamma _{\!k\!,i}}
 \!=\! \int\limits_0^{ + \infty }\!  {\frac{{{I_0}( {\frac{{2{\rho _k}\sqrt {{\gamma _{\!k\!,i}}{{\hat \gamma }_{k,i}}} }}{{{{\bar \gamma }_{k\!}}(1 - \rho _k^2)}}} ){\exp{( - \frac{{{\gamma _{\!k\!,i}} + \rho _k^2{{\hat \gamma }_{\!k\!,i}}}}{{{{\bar \gamma }_{k\!}}(1 - \rho _k^2)}})}}}}{{{{\bar \gamma }_{k\!}}(1 - \rho _k^2)}}Q( {\frac{{{\mathcal{C}}({\gamma _{k\!,i}}) - {r_i}}}{{\sqrt {\frac{1}{m}\left( {1 - {2^{ - 2{\mathcal{C}}({\gamma _{k\!,i}})}}} \right)} {{\log }_2}e}}} )} d{\gamma _{k\!,i}}\\
 &\!=\! \frac{1}{{\!\sqrt {\!2\pi } }}\!\! \!   \int\limits_0^{ + \infty }\!  {\!\int\limits_{\alpha({\gamma _{\!k\!,i}},{r_i})}^\infty \! \! \! \! {\frac{{{I_0}( {\frac{{2{\rho _k}\sqrt {{\gamma _{\!k\!,i}}{{\hat \gamma }_{k,i}}} }}{{{{\bar \gamma }_{k\!}}(1 - \rho _k^2)}}})}}{{{{\bar \gamma }_{k\!}}(1 - \rho _k^2)}}} {e^{ - \frac{{{\gamma _{\!k\!,i}} + \rho _k^2{{\hat \gamma }_{\!k\!,i}}}}{{{{\bar \gamma }_{k\!}}(1 - \rho _k^2)}} - \frac{{{t^2}}}{2}}}dtd} {\gamma _{\!k\!,i}}
\!  \mathop  = \limits^{x =\! \sqrt {\!\frac{{2{\gamma _{k,i}}}}{{{{\bar \gamma }_{\!k\!, }}\!(\!1 \!-\! \rho _k^2\!)\!}}} } \! \!\frac{1}{{\sqrt {2\pi } }}\!\int\limits_0^{ + \infty } \! \!{\int\limits_{\alpha(x,{r_i})}^\infty   \! \!{x{I_0}( {\frac{{x{\rho _k}\sqrt {2{{\hat \gamma }_{k,i}}} }}{{\sqrt {{{\bar \gamma }_{k}}(1 - \rho _k^2)} }}} )} {e^{ \!- \frac{{{x^2}}}{2} - \frac{{\rho _k^2{{\hat \gamma }_{k,i}}}}{{{{\bar \gamma }_{k}}(1 - \rho _k^2)}} - \frac{{{t^2}}}{2}}}\!dtd} x   \\
&\!=\!
\frac{1}{{\sqrt {2\pi } }}\int\limits_0^{ + \infty }\! {\int\limits_{\alpha(x,{r_i})}^\infty  \!\!{x{I_0}\left( { x{\rho _k}{x_{k,i}}} \right)} {e^{ - \frac{{{x^2} + \rho _k^2{x_{k,i}} + {t^2}}}{2}}}dtdx} . 
\end{split}
 \end{equation}
\vspace*{-2pt}
\hrule
\vspace*{-5pt}
\end{table*}
%
Then,  by averaging $\varepsilon _{k,i} $ over  the  conditional PDF in~\eqref{eq:Bessel_probability}, $\bar \varepsilon _{k,i} $, $k=1,2$   is given by~\eqref{eq:single_expected_error_prob}, where $\alpha(x,{r_i}) \!=\! \frac{{{\mathcal{C}}({x^2}{{\hat \gamma }_{k,i}}(1 - \rho _k^2)/2) - {r_i}}}{{\sqrt {\frac{1}{m}( {1 - {2^{ - 2{\mathcal{C}}({x^2}{{\hat \gamma }_{k,i}}(1 - \rho _k^2)/2)}}} )} {{\log }_2}e}}$ and ${x_{k,i}} = \sqrt {\frac{{2{{\hat \gamma }_{k,i}}}}{{{{\bar \gamma }_k}(1 - \rho _k^2)}}}$.

Notice that for the relaying system considered,  the (source-to-destination) equivalent coding rate during each frame $i$ is actually $r_i/2$. 
Therefore, the expected FBL throughput of relaying during frame $i$, i.e., the expected effectively transmitted information (number of correctly received bits at the destination) per channel use,  is given by:  
\begin{equation}
\label{eq:BL_capacity_of_a_frame}
 {\mu}_{{\rm{FBL}},i} = \mathcal{C}_{{\rm{FBL}}} (r_i) = r_i{(1-\bar \varepsilon _{{\rm{R}},i} (r_i)  )} /2 \mathrm{.}  
\end{equation}
The above ${\mu}_{{\rm{FBL}},i}$ is the expected FBL throughput of relaying for an upcoming frame $i$ based on a scheduled coding rate.  
By marginalizing over all possible channel states for both links, we finally end up with the average FBL throughput of relaying:
${{\mu}_{{\rm{FBL}}}} \!=\! \mathop {\mathbb{E}}\limits_{i=1,...,+\infty } \left[ {{\mathcal{C}_{{\rm{FBL}},i}}}  (r_i)  \right] \!=\! \mathop {\mathbb{E}}\limits_{\gamma _{1,i},\gamma _{2,i}} \left[ {{\mathcal{C}_{{\rm{FBL}},i}}}  (r_i)  \right]$.
Note in particular that this throughput depends on the scheduled coding rate $r_i$ which itself can be based on the information at hand of the source, i.e. the outdated SNRs $\hat \gamma _{1,i}$ and $\hat \gamma _{2,i}$. 
 
\subsection{Optimal Scheduling}
\label{sec:OptimalScheduling}
Recall that we are interested in scenarios  with reliability constraints,  i.e.,  the (expected/average) error probability of each link should be lower than a threshold  $\varepsilon_{\rm th}$ of practical interest, e.g., $\varepsilon_{\rm th} \ll 0.5$. 
If the source schedules the coding rate directly based on the outdated CSI, it is likely that the instantaneous SNR is lower. 
This can introduce a significant source of block errors while generally leading to a higher coding rate in case of successful transmissions, i.e. we face a typical trade-off.
To study this trade-off, we introduce weights, i.e., SNR back-offs, to let the source choose a relatively lower coding rate  obtained by scaling the outdated SNR.  
Denote these weights for frame $i$  for the backhaul link by $\eta_{1,i}$  and for the relaying link by $\eta_{2,i}$, where $0< \eta_{k,i}, k=1,2$. 
Recall that the performance of the two-hop relaying system is subject to the bottleneck link which can be either the backhaul or the relaying link. 
Thus, for a given selection of the weights $\eta$ the coding rate~$r_i$ of frame $i$ is determined based on the bottleneck link: $r_i =  {\mathcal{R}}(\min \{\eta_{1,i} {{\hat \gamma}}_{1,i}, \eta_{2,i} {{\hat \gamma}}_{2,i} \},\varepsilon_{\rm th} ,m)$. 

Our aim is to determine - per frame - the optimal scheduling weights (of the backhaul link and the relaying link) for coding rate scheduling which maximizes the average FBL throughput while guaranteeing the reliability of transmissions. 
Therefore, the optimization problem actually equals to maximize the expected FBL throughput per frame by solving the following optimization problem:
 \begin{equation}
\begin{split}
\label{eq: problem_general1}
 \mathop {\max }\limits_{\eta_{1,i}, \eta_{2,i}}  \     \      &{{\mu}_{{\rm{FBL}}}}  \\ 
s.t.: \    \ &  \bar \varepsilon_ {k,i}\!\le\! \varepsilon_{\rm th}, k=1,2; i = 1,...,+\infty. 
\end{split}
\end{equation} 
For this optimization problem, note that the space of feasible solutions for the scheduling weights is restricted in the following way:
We are interested in reliable transmission, i.e. we restrict the transmission to the reliability constraint $\epsilon_{\rm th} \ll 0.5$. Then, according to~\eqref{eq:Relationship_half_probability} we have   
$\textstyle{{\bar \varepsilon }_{k,i}} \!\!=\! \mathop {\mathbb{E}}\limits_{{\gamma _{k,i}}|{{\hat \gamma }_{k,i}}} \left[ {{\varepsilon _{k,i}}} \right] \le 0.5\Leftrightarrow \mathbb{P}  \{ {\mathop {\mathbb{E}}\limits_{{\gamma _{k,i}}|{{\hat \gamma }_{k,i}}}\!\! \left[ {\mathcal C({\gamma _{k,i}})} \right] \ge {r_i}}  \} \ge 0.5   \Leftrightarrow \mathbb{P} \left\{ {{\gamma _{k,i}} \ge {\eta _{k,i}}{{\hat \gamma }_{k,i}}} \right\} \ge 0.5 \Leftrightarrow {\eta _{k,i}} \le \rho _k^2$, 
which results thus in the space $\eta_{k,i} \in \left[0, \rho _k^2\right]$. 

Under this constraint, the following proposition can be shown with respect to the scheduling of the weights for the considered relay system:
\newtheorem{theorem}{Proposition}
\begin{theorem}
\label{th:monn-dec} 
{For a relay network operating on outdated CSI, if  the coding rate for frame $i$ is scheduled according to  $r_i =  {\mathcal{R}}(\min \{\eta_{1,i} {{\hat \gamma}}_{1,i}, \eta_{2,i} {{\hat \gamma}}_{2,i} \},\varepsilon_{\rm th}, m)$,  $\eta_{k,i} \in (0, {{\rho^2 _k}}], k =1 ,2$,   the expected  FBL throughput  of the upcoming frame~$i$, $ {\mu}_{{\rm{FBL}},i}  = \mathcal{C}_{{\rm{FBL}}} (r_i)$,  is concave in the coding rate~$r_i$.}
\end{theorem}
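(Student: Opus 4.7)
The strategy is to prove concavity through direct sign analysis of $\mathcal{C}_{\text{FBL}}''(r_i)$. Because $\gamma_{1,i}$ and $\gamma_{2,i}$ are independent conditional on the outdated samples, the overall success probability factors as $1-\bar{\varepsilon}_{R,i}(r_i)=(1-\bar{\varepsilon}_{1,i}(r_i))(1-\bar{\varepsilon}_{2,i}(r_i))$, so I first rewrite $\mathcal{C}_{\text{FBL}}(r_i)=\tfrac{r_i}{2}(1-\bar{\varepsilon}_{1,i}(r_i))(1-\bar{\varepsilon}_{2,i}(r_i))$ and then differentiate twice in $r_i$. The plan proceeds in two substeps: first establish monotonicity and convexity of each per-hop expected error $\bar{\varepsilon}_{k,i}(r_i)$, then combine these into a sign check on the resulting second-derivative expression.

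Monotonicity and convexity of $\bar{\varepsilon}_{k,i}(r_i)$ should follow from its integral representation over the Rician-type conditional density of $\gamma_{k,i}\mid\hat{\gamma}_{k,i}$. Interchanging differentiation and expectation (justified by dominated convergence), $\partial_{r_i}\bar{\varepsilon}_{k,i}$ collects the Gaussian density $\phi(\alpha_k)/\sigma_k$ and is therefore strictly positive, while $\partial^2_{r_i}\bar{\varepsilon}_{k,i}$ collects $\alpha_k\phi(\alpha_k)/\sigma_k^2$ with $\alpha_k=(\mathcal{C}(\gamma_{k,i})-r_i)/\sigma_k$. The sign of the second derivative is therefore tied to the sign of $\alpha_k$ under the conditional law. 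This is where the feasibility bound $\eta_{k,i}\leq\rho_k^2$ enters: as the excerpt derives, this bound is exactly equivalent to requiring the median of $\mathcal{C}(\gamma_{k,i})$ to dominate $r_i$, so the conditional mass on $\alpha_k\geq 0$ is at least one-half and in fact dominates, yielding $\bar{\varepsilon}_{k,i}''(r_i)\geq 0$.

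Carrying out the differentiation, with $u_{k,i}=1-\bar{\varepsilon}_{k,i}$, gives
\[
2\mathcal{C}_{\text{FBL}}''(r_i)=-2\bar{\varepsilon}_{1,i}'u_{2,i}-2u_{1,i}\bar{\varepsilon}_{2,i}'-r_i u_{2,i}\bar{\varepsilon}_{1,i}''-r_i u_{1,i}\bar{\varepsilon}_{2,i}''+2r_i\bar{\varepsilon}_{1,i}'\bar{\varepsilon}_{2,i}'.
\]
The first four summands are non-positive by the monotonicity and convexity just established, together with $u_{k,i}\geq 1-\varepsilon_{\text{th}}>0$. The remaining term is the only one that could be positive, and dominating it reduces to the sufficient inequality $r_i\leq u_{1,i}/\bar{\varepsilon}_{1,i}'+u_{2,i}/\bar{\varepsilon}_{2,i}'$. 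In the strict reliability regime $\varepsilon_{\text{th}}\ll 0.5$, a Mill's-ratio-type bound on the Q-function forces $\bar{\varepsilon}_{k,i}'$ to be much smaller than $u_{k,i}/r_i$, so this inequality is comfortable; this closes $\mathcal{C}_{\text{FBL}}''(r_i)\leq 0$.

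I expect the convexity claim $\bar{\varepsilon}_{k,i}''(r_i)\geq 0$ to be the main obstacle. Since $Q$ is concave on the negative half-line, low-$\gamma_{k,i}$ realizations with $\mathcal{C}(\gamma_{k,i})<r_i$ contribute negative mass to $\bar{\varepsilon}_{k,i}''$, and controlling this piece against the dominant positive mass requires an explicit estimate based on the conditional Rician density near zero combined with the weight constraint $\eta_{k,i}\leq\rho_k^2$. Once convexity is secured, everything else is term-by-term sign bookkeeping.
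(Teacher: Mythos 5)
Your overall skeleton matches the paper's proof: factor the success probability, differentiate $\mathcal{C}_{\rm FBL}(r_i)=\tfrac{r_i}{2}(1-\bar\varepsilon_{1,i})(1-\bar\varepsilon_{2,i})$ twice, show each $\bar\varepsilon_{k,i}$ is increasing and convex in $r_i$ by splitting the conditional integral at the median $\rho_k^2\hat\gamma_{k,i}$ (this is exactly how the paper handles the negative contribution from realizations with $\mathcal{C}(\gamma_{k,i})<r_i$, using the feasibility bound $\eta_{k,i}\le\rho_k^2$), and then do sign bookkeeping on the five terms of the second derivative.

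The genuine gap is in your closing step. You absorb the single positive term $2r_i\bar\varepsilon_{1,i}'\bar\varepsilon_{2,i}'$ into the two first-derivative terms, which requires $r_i\le u_{1,i}/\bar\varepsilon_{1,i}'+u_{2,i}/\bar\varepsilon_{2,i}'$, and you justify this only by asserting that $\bar\varepsilon_{k,i}'$ is small in the reliable regime. That assertion is not established and is not true uniformly over the stated feasible set $\eta_{k,i}\in(0,\rho_k^2]$: $\bar\varepsilon_{k,i}'$ is essentially the conditional density of $\mathcal{C}(\gamma_{k,i})$ near the scheduled rate, and when the channel correlation is strong and $\eta_{k,i}$ approaches $\rho_k^2$ (so that $r_i$ sits near the median of a sharply concentrated conditional capacity distribution) this slope can be large, making your sufficient inequality fail. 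The paper closes the argument differently and more robustly: it drops the first-derivative terms entirely, uses $\bar\varepsilon_{k,i}<0.5$ to lower-bound the coefficients of the second-derivative terms, and then dominates the cross term via the pointwise/Cauchy--Schwarz inequality $\tfrac{\partial^2\bar\varepsilon_{k,i}}{\partial r_i^2}>4\bigl(\tfrac{\partial\bar\varepsilon_{k,i}}{\partial r_i}\bigr)^2$ together with the AM--GM inequality, which is scale-consistent (both sides grow at the same rate as the conditional distribution concentrates) and does not require $r_i$ or $\bar\varepsilon_{k,i}'$ to be small. To repair your proof you would either need to prove a bound of that second-derivative-versus-squared-first-derivative type, or restrict the claim away from the boundary of the feasible set; as written, the domination step does not go through.
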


\begin{proof} {See Appendix A.}
\end{proof}

Recall that the coding rate is chosen by the source based on $\min \{\eta_{1,i} {{\hat \gamma}}_{1,i}, \eta_{2,i} {{\hat \gamma}}_{2,i} \}$. 
Due to~\eqref{eq:single_link_data_rate}, the coding rate is strictly increasing in $\min \{\eta_{1,i} {{\hat \gamma}}_{1,i}, \eta_{2,i} {{\hat \gamma}}_{2,i} \}$ and therefore increasing in $\eta_{1,i}$ or  $\eta_{2,i}$. 
In combination with {Proposition~1}, we thus obtain an important corollary regarding the optimal scheduling of the system:

\newtheorem{theorem1}{Corollary}
\begin{theorem1}
{For a relay network operating on outdated CSI, if  the coding rate for frame $i$ is scheduled according to  $r_i =  {\mathcal{R}}(\min \{\eta_{1,i} {{\hat \gamma}}_{1,i}, \eta_{2,i} {{\hat \gamma}}_{2,i} \},\varepsilon_{\rm th}, m)$,  $\eta_{k,i} \in (0, {{\rho^2 _k}}], k =1 ,2$,  
 $\mathcal{C}_{{\rm{FBL}},i}$ the expected  FBL throughput of frame $i$ is quasi-concave in $\eta_{1,i}$  in the region $(0, {{\rho^2 _1}}]$ and quasi-concave in  $\eta_{2,i}$ in the region $(0, {{\rho^2 _2}}]$.}
\end{theorem1}
\begin{proof} {See Appendix B.}
\end{proof}

According to Corollary 1, the expected FBL throughput of frame $i$ ${{C_{{\rm{FBL}},i}}}$ can be optimized by applying quasi-convex optimization techniques, e.g., backtracking line search, to obtain the optimal weights for determining the coding rate.  
Nevertheless, this can be computationally heavy, as this optimization step needs to be conducted prior to each data transmission. 
Note in this context that   the smaller the reliability requirement is, the smaller is also the search space of the scheduling weights, making it more likely that for a given instance the optimal solution is on the boundary of the feasible set. Still, in order to reach the optimal system performance, some computations need to be executed prior to each frame.

\subsection{Constant Weight Heuristic}
\label{sec:Fixed}
To further reduce the computational complexity, in this section we consider scheduling schemes where the weight is not adapted per frame.
Once the scheduling weights are determined at system initialization (depending on the average SNR and the correlation coefficients)  they remain constant during all frames.
We are interested in determining the constant heuristic with the best performance.

Denote these constant weights by $ \eta_1$ and $\eta_2$ for the backhaul and relaying link.  
Then, the coding rate for frame $i$ under the constant weight scheme is subject to the instantaneous SNR and the constant weights. 
As a result, obviously the coding rate is not constant over different frames. 
In particular, the coding rate~$r_i$ of frame $i$ is obtained by: $r_i \!\! =\! \!  {\mathcal{R}}(\min \{\eta_{1} {{\hat \gamma}}_{1,i}, \eta_{2} {{\hat \gamma}}_{2,i} \},\varepsilon_{\rm th}, m)$. 
According to~\eqref{eq:single_link_data_rate}, the coding rate $r_i $ is strictly increasing in $\min \{\eta_{1} {{\hat \gamma}}_{1,i}, \eta_{2} {{\hat \gamma}}_{2,i} \}$ and therefore
monotonically increasing in $\eta_1$ and $\eta_2$. 
Thus, under this constant weight scheme, the average FBL throughput can be determined by:
\begin{IEEEeqnarray}{LLL}
\label{eq: FBL throughput_over_fading}
\begin{split}
\!\!&\!\!{{\mu}_{{\rm{FBL}}}} (\eta_1,\eta_2)  \!=\! \mathop {\mathbb{E}}\limits_{{r_i}} \left[ {{\mathcal{C}_{{\rm{FBL}}}}({r_i})} \right]  \\
\!= & \!\int_0^\infty \!\! \! {\int_0^\infty   \!\!\!\! {{\mathcal C_{{\rm{FBL}}}}\left( {{\mathcal{R}}(\min \{ {\eta _{\rm{1}}}{{\hat \gamma }_{\rm{1}}},{\eta _{\rm{2}}}{{\hat \gamma }_{\rm{2}}}\} ,\varepsilon_{\rm th}, m)} \right)} } {e^{ \!-\! \frac{{{{\hat \gamma }_1}}}{{{{\bar \gamma }_1}}} \!-\! \frac{{{{\hat \gamma }_2}}}{{{{\bar \gamma }_2}}}}}d\frac{{{{\hat \gamma }_{\rm{1}}}}}{{\bar \gamma }}d\frac{{{{\hat \gamma }_2}}}{{\bar \gamma }}\\
  \!=&  \frac{1}{{{{\bar \gamma }_1}\!{{\bar \gamma }_2}}}\!\int_0^\infty  \!{\int_{\frac{{{\eta _1}{{\hat \gamma }_{\rm{1}}}}}{{{\eta _2}{{\bar \gamma }_2}}}}^\infty  \! {{\mathcal{C}_{{\rm{FBL}}}}\left( {{\mathcal{R}}({\eta _{\rm{1}}}\!{{\hat \gamma }_{\rm{1}}},\varepsilon_{\rm th}, m)} \right)} } {e^{ \!- \frac{{{{\hat \gamma }_1}}}{{{{\bar \gamma }_1}}} \!-\! \frac{{{{\hat \gamma }_2}}}{{{{\bar \gamma }_2}}}}}d{{\hat \gamma }_2}d{{\hat \gamma }_1}\\
  & \!+\! \frac{1}{{{{\bar \gamma }_1}\!{{\bar \gamma }_2}}}\!\int_0^\infty  \!{\int_{\frac{{{\eta _2}{{\hat \gamma }_2}}}{{{\eta _1}{{\bar \gamma }_1}}}}^\infty\!  {{\mathcal{C}_{{\rm{FBL}}}}\!\left( {{\mathcal{R}}({\eta _2}{{\hat \gamma }_2},\varepsilon_{\rm th}, m)} \right)} } {e^{ \!- \frac{{{{\hat \gamma }_1}}}{{{{\bar \gamma }_1}}} \!- \frac{{{{\hat \gamma }_2}}}{{{{\bar \gamma }_2}}}}}d{{\hat \gamma }_1}d{{\hat \gamma }_2}.
\end{split}
 \end{IEEEeqnarray}
Under the best constant heuristic, the aim is to maximize the average FBL throughput while the constraint is to guarantee the average error probability over time~\footnote{From a statistical point of view, guaranteeing the expected error probability per frame leads to the same results as guaranteeing the average error probability over time}.  
Then, the resulting optimization problem is given by: 
\begin{equation}
\begin{split}
\label{eq: problem_constant}
\mathop {\max }\limits_{\eta_{1},\eta_{2} } \quad  &{\mu}_{{\rm{FBL}}} (\eta_1,\eta_2). \\
s.t: \quad & \!\!\!\mathop \mathbb{E} \limits_{{\gamma _{k,i}}} [  \bar \varepsilon_ {k,i}]\!\le\! \varepsilon_{\rm th}, k=1,2 .  
\end{split}
\end{equation} 
As  we assume $\varepsilon_{\rm th} \ll 0.5$, we have
$\!\!\mathop \mathbb{E} \limits_{{\gamma _{k\!,i}}}\! [\bar \varepsilon_ {k\!,i}]\! \!=\! \!\mathop {\mathbb{E}}\limits_{{{\hat \gamma }_{k\!,i}}}\! [{\mathop {\mathbb{E}}\limits_{{\gamma _{k\!,i}}\!|{{\hat \gamma }_{k\!,i}}}\!\! [ {{\varepsilon _{k\!,i}}} ]} ]\! \le\! 0.5 \!\!\Leftrightarrow$ $   \! \! \! \mathop {\mathbb{E}}\limits_{{{\hat \gamma }_{k\!,i}}} \![ {\mathop {\mathbb{E}}\limits_{{\gamma _{k\!,i}}\!|{{\hat \gamma }_{k\!,i}}}\!\!\! [ {Q ( {\frac{{\mathcal C({\gamma _{k\!,i}}) - {r_i}}}{{\sqrt {{V_{{\rm{comp}}}}{\rm{/}}m} }}}  )} ]} ]  \!\!\!\le\!\! 0.5 \!\!\Leftrightarrow \!\!\mathop {\rm{E}}\limits_{{{\hat \gamma }_{k\!,i}}}\! [ {\mathbb{P}  \{\!\!\! \!{\mathop {\mathbb{E}}\limits_{{\gamma _{k\!,i}}\!|{{\hat \gamma }_{k\!,i}}}\!\!  \! \![ {\mathcal C({\gamma _{k\!,i}})} ]   \!\!\ge  \! \!{r_i}}  \}} ]\!\! \ge\!\! 0.5 \Leftrightarrow \mathop {\mathbb{E}}\limits_{{{\hat \gamma }_{k\!,i}}} [ {\mathbb{P} \left\{ {{\gamma _{k\!,i}} \ge {\eta _{k\!}}{{\hat \gamma }_{k\!,i}}} \right\}} ] \ge 0.5 \Leftrightarrow {\eta _{k\!}} \le \rho _k^2$. Therefore, the feasible set of   $\eta_{k,i}$ is $(0, \rho _k^2]$ under the case  $\varepsilon_{\rm th}=0.5$ and  covers a subset of $(0, \rho _k^2]$ when $\varepsilon_{\rm th} \ll 0.5$. 

Denote  $\eta^*_1$ and~$\eta^*_2$ as the solution to the above optimization problem, i.e. they are the optimal, constant scheduling weights. 
We then have the following proposition:

\begin{theorem}
{Considering a relay network operating on outdated CSI with constant scheduling weights, if the coding rate of each frame  $i$  is scheduled according to $r_i =  {\mathcal{R}}(\min \{\eta_{1} {{\hat \gamma}}_{1,i}, \eta_{2} {{\hat \gamma}}_{2,i} \},\varepsilon_{\rm th}, m)$, 
 then the average FBL throughput ${C}_{{\rm{FBL}}}$ is quasi-concave in $\eta_{1}$  in the region $(0, {{\rho^2 _1}}]$ and quasi-concave in  $\eta_{2}$ in the region $(0, {{\rho^2 _2}}]$}.
\end{theorem}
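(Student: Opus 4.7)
The plan is to adapt the argument of Corollary~1 to the averaged throughput in~(11) by combining Proposition~1 with a unique-sign-change analysis of $\partial\mu_{\rm FBL}/\partial\eta_{1}$; throughout I fix $\eta_{2}\in(0,\rho_{2}^{2}]$ and examine $\mu_{\rm FBL}(\cdot,\eta_{2})$ as a function of $\eta_{1}\in(0,\rho_{1}^{2}]$, the argument in $\eta_{2}$ being symmetric after swapping the roles of the two links. I would first fix an arbitrary realization $(\hat\gamma_{1},\hat\gamma_{2})$ and revisit the integrand in~(11). By~(4) the scheduled coding rate $r_{i}=\mathcal{R}(\min\{\eta_{1}\hat\gamma_{1},\eta_{2}\hat\gamma_{2}\},\varepsilon_{\rm th},m)$ is continuous and monotonically nondecreasing in $\eta_{1}$ (strictly so when link~1 is the bottleneck). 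Because $\mathcal{C}_{\rm FBL}$ is concave in $r_{i}$ by Proposition~1, it is in particular unimodal with a unique maximizer $r^{\star}$, being strictly increasing on $(0,r^{\star})$ and strictly decreasing on $(r^{\star},\infty)$. Composing this unimodal function with the monotonic map $\eta_{1}\mapsto r_{i}$ shows, exactly as in the proof of Corollary~1, that the per-realization integrand is quasi-concave in $\eta_{1}$ on $(0,\rho_{1}^{2}]$.

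The main obstacle is that quasi-concavity is not generally preserved under integration against the joint density of $(\hat\gamma_{1},\hat\gamma_{2})$, so a direct invocation of Corollary~1 is insufficient. I would instead work with $\partial\mu_{\rm FBL}/\partial\eta_{1}$, obtained by interchanging differentiation and integration in~(11); the boundary contributions from the bottleneck-splitting integration limits cancel because the integrand is continuous at $\eta_{1}\hat\gamma_{1}=\eta_{2}\hat\gamma_{2}$. Each contribution to the resulting integral factors as $\tfrac{d\mathcal{C}_{\rm FBL}}{dr_{i}}\cdot\tfrac{\partial r_{i}}{\partial\eta_{1}}$ with $\tfrac{\partial r_{i}}{\partial\eta_{1}}\ge 0$ uniformly, so its sign is dictated by $\tfrac{d\mathcal{C}_{\rm FBL}}{dr_{i}}$, which is positive for $r_{i}<r^{\star}$ and negative for $r_{i}>r^{\star}$.

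The crux is then to show that $\partial\mu_{\rm FBL}/\partial\eta_{1}$ itself changes sign at most once on $(0,\rho_{1}^{2}]$, which together with continuity and the fact that the throughput vanishes as $\eta_{1}\to 0^{+}$ will yield the desired quasi-concavity. Increasing $\eta_{1}$ shifts $r_{i}(\eta_{1};\hat\gamma_{1},\hat\gamma_{2})$ upward for every realization, so the set of realizations for which $r_{i}>r^{\star}$---the realizations contributing negatively to the derivative---grows monotonically with $\eta_{1}$. Turning this monotone-dominance intuition into a rigorous single-crossing statement for the integrated derivative is the technically delicate step; I would attempt it either via a coupling argument across realizations that preserves the sign ordering, or by verifying a log-concavity property of the integral kernel in (11) and invoking a Prékopa-type preservation theorem.
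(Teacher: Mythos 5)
Your proposal is not a complete proof: everything up to the reduction is sound, but the step you yourself flag as ``the technically delicate step'' --- showing that $\partial\mu_{\rm FBL}/\partial\eta_{1}$ crosses zero at most once on $(0,\rho_1^2]$ --- is where the entire mathematical content lies, and you leave it as two unexecuted candidate strategies. The monotone-dominance intuition does not suffice on its own: each realization's contribution $\frac{d\mathcal{C}_{\rm FBL}}{dr_{i}}\cdot\frac{\partial r_{i}}{\partial\eta_{1}}$ does change sign at most once (from positive to negative) as $\eta_{1}$ grows, but an integral of single-crossing functions whose magnitudes vary with $\eta_1$ can itself cross zero several times, so without an additional structural ingredient (an actual coupling across realizations, total positivity of the kernel, or joint log-concavity feeding into a Pr\'ekopa-type theorem) the integrated derivative could in principle oscillate. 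None of these is verified, and log-concavity of the integrand in the double-integral expression for $\mu_{\rm FBL}(\eta_1,\eta_2)$ is far from obvious. As written, the argument is a plan rather than a proof.

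That said, you have correctly diagnosed the real difficulty, and the paper's own Appendix C does not resolve it either: the paper takes the per-frame inequality $\min\bigl\{\mathcal{C}_{\rm FBL}(r_i|_{\eta_k=x}),\mathcal{C}_{\rm FBL}(r_i|_{\eta_k=y})\bigr\}\le\mathcal{C}_{\rm FBL}(r_i|_{\eta_k=\lambda x+(1-\lambda)y})$ from Corollary~1 and sums over $i$, but summation only yields $\sum_i\min\{A_i,B_i\}\le\sum_i C_i$, whereas quasi-concavity of the average requires $\min\{\sum_i A_i,\sum_i B_i\}\le\sum_i C_i$; since $\sum_i\min\{A_i,B_i\}\le\min\{\sum_i A_i,\sum_i B_i\}$, the implication runs the wrong way. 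In effect the paper silently assumes that quasi-concavity survives averaging over the fading distribution --- exactly the assumption you refuse to make. Your route via a sign analysis of the derivative is the natural way to try to close this gap, but in your write-up it remains open.
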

\begin{proof} {See Appendix C.}
\end{proof}

According to Proposition 2, \eqref{eq:  problem_constant} can be efficiently solved by applying quasi-convex optimization techniques.  
For a relay system with a certain set of average SNRs and correlation coefficients as well as   a given reliability constraint, we obtain a unique pair of fixed scheduling weights. 
Note that these fixed weights are then strictly applied per frame, leading to a varying coding rate that maximizes the long-term average FBL throughput (under the assumption of using fixed weights).
This reduces drastically the computational complexity, but leads to an inferior system performance in comparison to the optimal scheduling scheme with adaptive scheduling weights, i.e. the optimal solution presented in Section~\ref{sec:OptimalScheduling}.
\section{Numerical Evaluation and Discussion}
\label{sec:simulation}

In this section, we present some numerical results regrading the throughput maximization in relay systems.
We consider in particular two issues: Initially, we study several aspects of the quasi-convexity of the FBL throughput with respect to the scheduling weights. In particular, we are interested in the sharpness of the optimum. This investigation is important for practical system design, as it clarifies the potential cost of non-optimal weight selection.
After clarifying these issues, we move to a more general performance investigation. Here, we are especially interested in the performance comparison between the optimal scheduling scheme (with changing scheduling weights per frame) and the low-complexity best constant heuristic. 

From a methodological point of view, all following numerical results are based on simulations.  
We consider a basic scenario for these simulations with the following parameterization: 
We assume an urban outdoor scenario where the distances of the backhaul and relaying link are both set to 100~$\rm{m}$. For channel propagation, we utilize the well-known COST~\cite{Molisch_2011} model (which is a commonly-used model for urban scenarios) for calculating the path loss. The center frequency is set to to 2~$\rm{GHz}$ while the transmit power $p_{\rm{tx}}$ is selected to 35~$\rm{dBm}$ (we vary the transmit power in  Figures~\ref{BL_capa_threshold} and \ref{Dynamic_fixed_blt}) considering a noise power of -90~$\rm{dBm}$,  respectively.  Lastly, the blocklength at each hop of relaying is set to $m = 300$ symbols\footnote{From [8, Figure 2] it is known that the relative difference of the approximate and the exact achievable rates is less than 2\% for cases with $m \ge$ 100.}. 
Recall that the channel correlation coefficients $\rho^2_1$ and $\rho^2_1$ of the backhaul and relaying links are subject to the settings of $n$, $n+m$ and the Doppler frequency. In particular, $\rho^2_1 \ge \rho^2_2$ as $n \le n+m$, i.e., the CSI of the relaying phase is more delayed. 
In the simulation, we don't set a fixed value for either the length of initialization phase $n$ or the Doppler frequency. Instead, we consider different setups of  $\rho^2_1$ and $\rho^2_1$, which corresponds to different settings of $n$ and the Doppler frequency as $m$ is fixed, while $\rho^2_1 \ge \rho^2_2$ holds  for all setups.

\subsection{Quasi-Convexity of the FBL Throughput}
\label{sec:simulation1}
In this subsection, we consider numerical results regarding the quasi-convexity of the average FBL throughput. 
We first study the relationship between the expected FBL throughput of an upcoming frame and the choice of scheduling weights (based on the corresponding outdated CSI) in case of the optimal scheduling scheme that adapts the weights per frame. 
In order to do so, we fix the outdated CSI and generate realizations of the corresponding instantaneous channel states. 
Then, we study the expected FBL throughput (the expectation/average over all realizations) by varying the scheduling weights. 
The results are shown in Figure~\ref{BL_capa_perframeweights}. 
\begin{figure}[!h]
\centering{\includegraphics[width=100mm, trim=0 25 0 26]{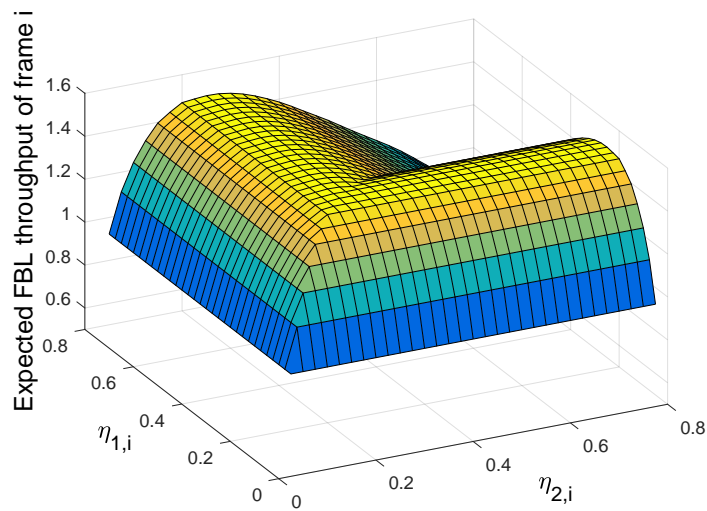}}
\caption{Expected FBL throughput [bit/ch.use] of an upcoming frame $i$  vs. the choice of scheduling weights. In the figure, we set  $\rho^2_1=0.7$ and $\rho^2_2=0.5$.}
\label{BL_capa_perframeweights}
\end{figure}
First of all, the figure illustrates that the FBL throughput per frame is quasi-concave in the scheduling weights~$\eta_{1,i}$ and~$\eta_{2,i}$.  
Hence, by choosing appropriate values for~$\eta_{1,i}$ and~$\eta_{2,i}$ the FBL throughput can be optimized. 
Secondly, the figure also shows that the expected FBL throughput of the upcoming frame is actually subject to both scheduling weights $\eta_{1,i}$  and  $\eta_{2,i}$ in general. 
However, if $\eta_{1,i}$ is chosen optimally, then the choice of $\eta_{2,i}$ can be arbitrarily large (but not arbitrarily small). 
This stems essentially from the fact how the scheduling weights influence the bottleneck link. The case "FBL throughput being only influenced by  $\eta_{1,i}$" corresponds to the situation where the bottleneck link (for determining the coding rate) is the backhaul link. At the same time,~as long as $\eta_{2,i}$ (the scheduling weight of the relaying link) is not set to a very small value, the impact on the SNR of the backhaul link is considerably small and therefore does not influence the coding rate. In other words, there is no impact of a link's scheduling weight on the FBL throughput of the upcoming frame if this link is not the bottleneck link. Obviously, reducing the scheduling weight of a link likely makes this link become the bottleneck link eventually. 
As a consequence of this dependence between $\eta_{1,i}$  and $\eta_{2,i}$, we observe that there are multiple solutions maximizing the FBL throughput surface for the considered channel setting.

\begin{figure}[!t]
\centering{\includegraphics[width=100mm, trim=0 10 0 25]{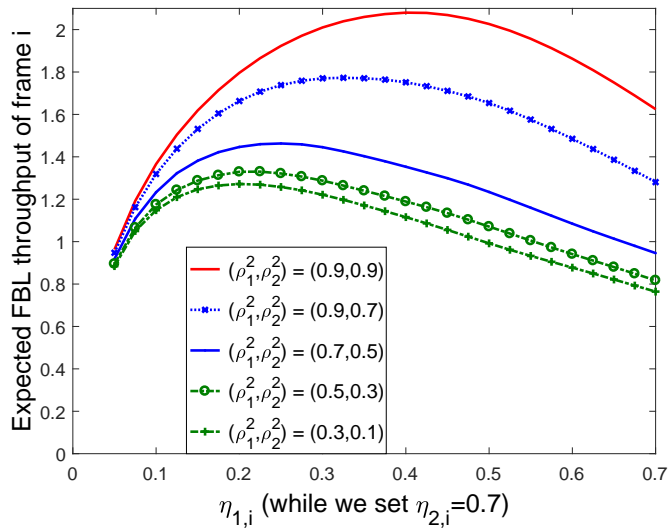}}
\caption{Expected FBL throughput [bit/ch.use] of an upcoming frame  $i$ with different channel correlation coefficients. In the figure, we vary the scheduling weight $\eta_{1,i}$ while setting $\eta_{2,i}=0.7$.}
\label{BL_capa_perframeweights2}
\end{figure}
We next study the quasi-convexity of the optimal scheduling for scenarios with different channel correlation setups. 
The results are shown in Figure~\ref{BL_capa_perframeweights2} where we fix $\eta_{2,i}$ to 0.7, i.e., make the backhaul link the bottleneck and vary $\eta_{1,i}$. 
The figure reveals that a stronger channel correlation results in a higher optimal FBL throughput. 
More importantly, this higher maximum is achieved by a bigger scheduling weight. 
In other words, a strong channel correlation allows us to set the scheduling weight more aggressively, leading to a higher coding rate and a higher FBL throughput.

We now turn to the constant heuristic, where the scheduling weight is determined once for the given system and then left constant for each frame.   
In Figure~\ref{BL_capa_fixedweights}, we show the relationship between the average FBL throughput $\mathcal{C}_{\rm BL}$ and the constant scheduling weights  $\eta_1$ and $\eta_2$ while generating many different outdated channel instances and the corresponding instantaneous channel state realizations. 
\begin{figure}[!t]
\centering{\includegraphics[width=100mm, trim=0 18 0 32]{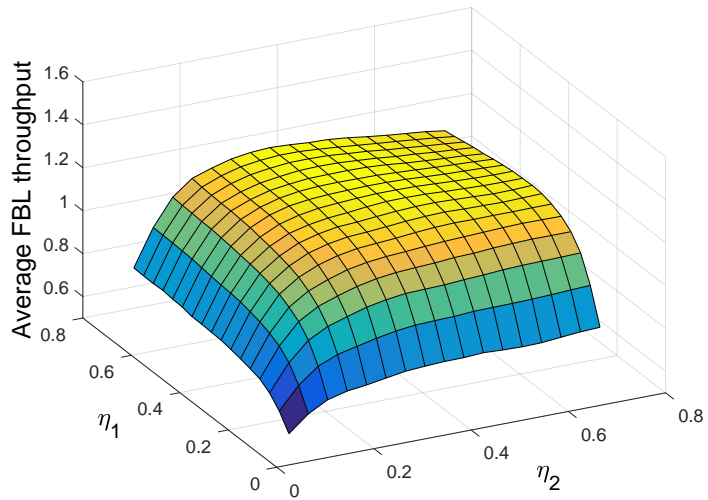}}
\caption{Average FBL throughput [bit/ch.use] vs. choice of constant scheduling weights for a relay system with parameters $\rho^2_1=0.7$ and $\rho^2_2=0.5$.}
\label{BL_capa_fixedweights}
\end{figure}
\begin{figure}[!t]
\centering{\includegraphics[width=100mm, trim=0 13 0 25]{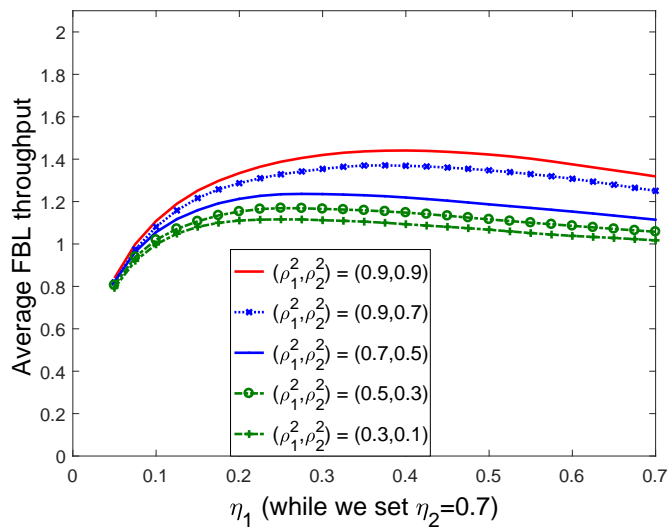}}
\caption{Average FBL throughput [bit/ch.use] vs. choice of constant scheduling weights for relay systems with different channel correlation coefficients. In the figure, we vary the fixed scheduling weight $\eta_{1}$ while setting $\eta_{2}=0.7$.}
\label{BL_capa_fixedweights2}
\end{figure}
Firstly, the figure confirms again our analytical insight (Proposition~2), i.e.   ${C_{{\rm{FBL}}}} $ is  quasi-concave in $\eta_1$ or $\eta_2$. 
In addition, we observe that a near-optimal FBL throughput is achieved for a large set of different scheduling weights, e.g., a small error of the optimal solution does not change the average FBL throughput too much.
Hence, the FBL throughput in the case of the constant scheduling weights is somewhat robust to an erroneous choice of the weights. 
Similar to the optimal scheduling, we further study the average FBL throughput of the best constant heuristic  with different channel correlation coefficients. 
The results are provided in Figure~\ref{BL_capa_fixedweights2}. 
It is shown that under the best constant heuristic scheme a strong channel correlation also introduces a higher FBL throughput attained for larger scheduling weights.
Nevertheless, note that the throughput difference is smaller when comparing the throughput for small and large channel correlations in case of the constant scheduling weights in comparison to the optimal, adaptive choice of the scheduling weights per frame (Figure~\ref{BL_capa_perframeweights2}).

We conclude the discussion regarding the quasi-convexity by summarizing the following guidelines for choosing the scheduling weights:   
Firstly, in general the optimal weights are lower than $0.5$ even for channels with high correlation coefficients.  
Secondly, in comparison to a weak channel correlation, a strong one allows us to set a relatively bigger scheduling weight. 
Thirdly, it is important to have an accurate characterization of the channel correlation, otherwise the FBL throughput can be significantly reduced. 
In particular, the optimal scheduling scheme is more sensitive to an inaccurate knowledge of the channel correlations than the constant weight heuristic.
Furthermore, a low error probability constraint leads to a small feasible set for choosing the scheduling weights. 
Finally, it appears that for constant weight scheduling, the choice of the weights is less sensitive to wrong choices, especially if these choices end up being too large. 
In the case of the optimal scheduling, this only applies to cases where one of the link weights is set optimally.

\subsection{Optimal vs. Constant Scheduling}
In this subsection, we focus on investigating the performance gap between the two schemes presented in Section~\ref{sec:OptimalScheduling} and~\ref{sec:Fixed} for a set of variable parameters with respect to the SNR, reliability constraint and channel correlation coefficient.
To start with, we show the FBL throughput of the two schemes versus the average channel SNR while considering different settings of the reliability threshold $\varepsilon_{\rm th}$.
\begin{figure}[!t]
\centering{\includegraphics[width=100mm, trim=10 10 0 8]{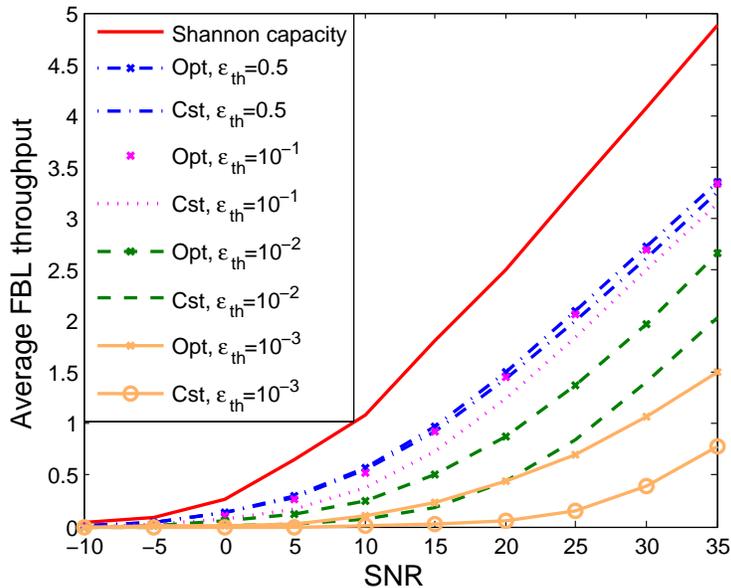}}
\caption{Comparison of the average FBL throughput [bit/ch.use] for the two different schemes (optimal and constant scheduling) versus the average channel SNR for different reliability constraints $\varepsilon_{\rm th}$. For the channel correlations, we considered $\rho^2_1=0.7$ and $\rho^2_2=0.5$.}
\label{BL_capa_threshold}
\end{figure}
The results are shown in Figure~\ref{BL_capa_threshold}, where the average FBL throughputs are based on the optimal/sub-optimal choice of coding rate under either the optimal scheduling or the best constant heuristic.
Firstly, we observe that  the lower the reliability threshold is, the lower the optimal average FBL throughput is. 
Secondly, a higher SNR also increases the gap between the optimal scheduling and the constant heuristic.
Lastly, a lower reliability constraint $\varepsilon_{\rm th} $ leads to a significantly bigger gap between the two schemes. 
For instance, the gap is quite big under the constraint $\varepsilon_{\rm th} =10^{-3}$  while it is small when $\varepsilon_{\rm th} =0.5$. 
This suggests that it only pays off to spend the additional computational complexity for the optimal scheduling scheme in case of a high reliability constraint (i.e. a rather low requirement on the error probability).
In case of a rather low reliability constraint, there is no big difference between the two scheduling schemes.
This is essentially due to the fact that in case of a high reliability constraint the constant heuristic needs to select a proportionally lower value for the scheduling weight to fulfill the reliability constraint even in cases where the instantaneous channel state drops significantly below the outdated CSI.
In case of the optimal scheduling, this can be compensated for by frame-specific scheduling of the weights.

Finally, we show in Figure~\ref{Dynamic_fixed_blt}  the  average FBL throughput of the two different schemes regarding different channel correlation coefficients while the reliability constraint is fixed at $\varepsilon_{\rm th}=10^{-2}$.
\begin{figure}[!t]
\centering{\includegraphics[width=100mm, trim=10 10 0 8]{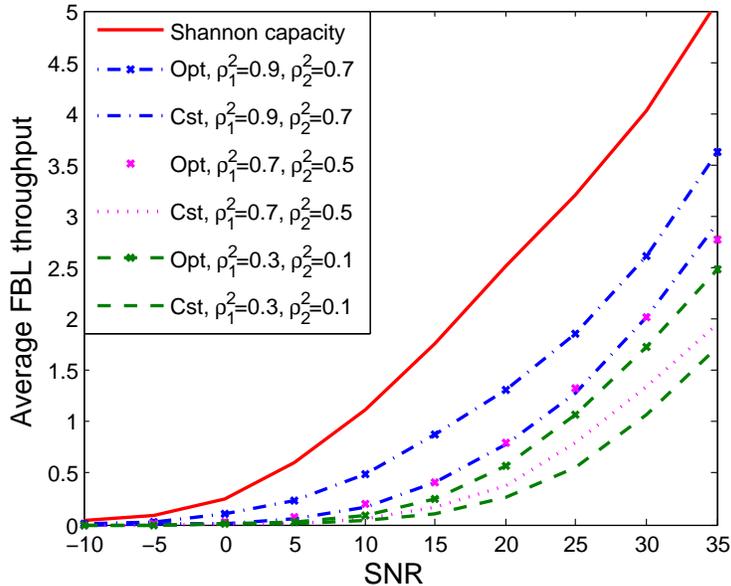}}
\caption{Comparison of the average FBL throughput [bit/ch.use] for the two different schemes (optimal and constant scheduling) versus the average channel SNR for different correlation coefficients of the links. For the reliability constraint we set $\varepsilon_{\rm th}=10^{-2}$.}
\label{Dynamic_fixed_blt}
\end{figure}
We observe that there is a big loss in comparison to the Shannon capacity, even for the FBL throughput with a strong channel correlation.
In addition, a stronger channel correlation introduces a higher FBL throughputs for both the optimal scheduling and the constant heuristic. 
Furthermore, in the high SNR region the performance gap between the two schemes is less influenced by the channel correlation, e.g., at point SNR=25 dB the gap between the two schemes of the case $(\rho^2_1=0.9,\rho^2_2=0.7)$ is quite similar to the gaps under the other two cases.
This is due to the fact that a strong channel correlation makes the outdated CSI more accurate, which reduces the importance of choosing the best scheduling weight. 
As a result, the performance gap between the optimal and the heuristic scheduling schemes is relatively constant.
On the other hand, in the low SNR region the gap between the two schemes is slightly bigger for the case with a strong channel correlation.
 
Combining the insights from Figure~\ref{BL_capa_threshold} and Figure~\ref{Dynamic_fixed_blt}, we can conclude that the performance gap between the proposed optimal and heuristic scheduling schemes mainly depends on the error probability threshold and the average channel SNR,  while it is only marginally influenced by the channel correlation coefficients.
Thus, it is perhaps only worth to spend the computational complexity of the optimal scheme in case of high reliability constraints and a rather high average SNR.

\section{Conclusion}
\label{sec:Conclusion}
In this work, we study the finite blocklength performance of relaying with outdated CSI. 
Both an optimal and an low-complexity sub-optimal scheduling scheme are proposed to maximize the FBL throughput while satisfying a reliability constraint regarding the data transmission. 
We show that in both cases an optimal performance can be obtained by exploiting the quasi-concavity of the FBL throughput with respect to scheduling weights.
By numerical analysis, we conclude a set of guidelines for the design of efficient relaying systems in the FBL regime.
Firstly, 
it is important to have accurate channel correlation information, otherwise the inaccurate channel correlation coefficients can reduce the throughput. 
In particular, the optimal scheme is more sensitive regarding the accuracy of the channel correlation.
Secondly, the optimal scheme is more sensitive to erroneous selection of the scheduling weights in comparison to the constant scheme. 
Thus, in practice a precise computation of the scheduling weights in case of the optimal scheme needs to be performed which nevertheless only pays off in certain scenarios.
For the constant scheme, a less accurate computation of the optimal weights leads already to a satisfactory performance in particular if the scheduling weights are chosen rather too large versus too small. 
Thirdly, the performance gap between the proposed two schemes depends mainly on the reliability constraint regarding the data transmissions. 
The stricter this constraint is, the more does the optimal scheme outperform the constant scheme. 
Finally, the performance gap between the two schemes is less influenced by the channel correlation coefficients.

\appendices
\section{Proof of   Proposition  1}
Based on~\eqref{eq:Overall_error_pro_overchannelfading} and~\eqref{eq:BL_capacity_of_a_frame}, we immediately have ${\mu}_{{\rm{FBL}},i}  ={\mathcal{C}_{{\rm{FBL}}}}({r_i}) =  {{(1 - {{\bar \varepsilon }_{1,i}}({r_i}))(1 - {{\bar \varepsilon }_{2,i}}({r_i})){r_i}}}/{2}$. 
Therefore, the first and second derivatives of ${\mathcal C_{{\rm{FBL}}}}$ with respect to $r_i$ are given by:
\begin{equation}
\begin{split}
\frac{{\partial {\mathcal{C}_{{\rm{FBL}}}}}}{{\partial {r_i}}} = & \frac{{(1 - {{\bar \varepsilon }_{1,i}}({r_i}))(1 - {{\bar \varepsilon }_{2,i}}({r_i}))}}{2} \\
&- \frac{{\partial {{\bar \varepsilon }_{1,i}}}}{{\partial {r_i}}}\frac{{(1 - {{\bar \varepsilon }_{2,i}}({r_i})){r_i}}}{2} - \frac{{\partial {{\bar \varepsilon }_{2,i}}}}{{\partial {r_i}}}\frac{{(1 - {{\bar \varepsilon }_{1,i}}({r_i})){r_i}}}{2},
\end{split}
\end{equation}
  
\begin{equation}
\label{eq: FBL throughput_second_div}
\begin{split}
\frac{{{\partial ^2}{\mathcal{C}_{{\rm{FBL}}}}}}{{{\partial ^2}{r_i}}}  =   
&- \frac{{\partial {{\bar \varepsilon }_{1,i}}}}{{\partial {r_i}}}(1 - {{\bar \varepsilon }_{2,i}}({r_i})) - \frac{{\partial {{\bar \varepsilon }_{2,i}}}}{{\partial {r_i}}}(1 - {{\bar \varepsilon }_{1,i}}({r_i})) \\
& - \frac{{{\partial ^2}{{\bar \varepsilon }_{1,i}}}}{{{\partial ^2}{r_i}}}\frac{{(1 - {{\bar \varepsilon }_{2,i}}({r_i})){r_i}}}{2} - \frac{{{\partial ^2}{{\bar \varepsilon }_{2,i}}}}{{{\partial ^2}{r_i}}}\frac{{(1 - {{\bar \varepsilon }_{1,i}}({r_i})){r_i}}}{2}\\
& + \frac{{\partial {{\bar \varepsilon }_{1,i}}}}{{\partial {r_i}}}\frac{{\partial {{\bar \varepsilon }_{2,i}}}}{{\partial {r_i}}}{r_i}.
\end{split}
\end{equation}

In the following, we prove  {Proposition~1} by showing $\frac{{{\partial ^2}{\mathcal C_{{\rm{FBL}}}}}}{{{\partial ^2}{r_i}}}<0$.
 
Recall that $ \varepsilon _{k,i}= {\mathop{\mathcal{P}}} (\gamma_{k,i},r_i,m),k=1 ,2$. According to~\eqref{eq:single_link_data_rate}, we have:
\begin{equation}
\label{eq: overallerror_first_div}
\frac{{\partial {\varepsilon _{k,i}}}}{{\partial {r_i}}} = \frac{{{m^{\frac{1}{2}}}{\exp{( - \frac{{m{{\left( {{\mathcal{C}}({\gamma _{k,i}}) - r_i} \right)}^2}}}{{2( {1 - {2^{ - 2{\mathcal{C}}({\gamma _{k,i}})}}} ){{\left( {{{\log }_2}e} \right)}^2}}})}}}}{{\sqrt {2\pi } {{\left( {1 - {2^{ - 2{\mathcal{C}}({\gamma _{k,i}})}}} \right)}^{\frac{1}{2}}}{{\log }_2}e}} > 0,
\end{equation}
\begin{equation}
\label{eq: overallerror_second_div}
\frac{{{\partial ^2}{\varepsilon _{k,i}}}}{{{\partial ^2}{r_i}}} = \frac{{{m^{\frac{3}{2}}}\left( {{\mathcal{C}}({\gamma _{k,i}}) - {r_i}} \right){\exp{( - \frac{{m{{\left( {{\mathcal{C}}({\gamma _{k,i}}) - {r_i}} \right)}^2}}}{{2( {1 - {2^{ - 2{\mathcal{C}}({\gamma _{k,i}})}}} ){{\left( {{{\log }_2}e} \right)}^2}}})}}}}{{\sqrt {2\pi } {{( {1 - {2^{ - 2{\mathcal{C}}({\gamma _{k,i}})}}} )}^{\frac{3}{2}}}{{\left( {{{\log }_2}e} \right)}^3}}}.
\end{equation}

Based on~\eqref{eq:single_expected_error_prob}, we have:
\begin{equation}
\frac{{\partial {{\bar \varepsilon }_{k,i}}}}{{\partial {r_i}}} = \int\nolimits_0^{ + \infty } {\mathop
\mathbb{P}\left[ \gamma _{k,i} | \hat{\gamma} _{k,i}\right] 
 \frac{{\partial {\varepsilon _{k,i}}}}{{\partial {r_i}}}d} {\gamma _{k,i}} > 0,
\end{equation}
\begin{equation}
\label{eq:2nd_div_r_i}
\frac{{{\partial ^2}{{\bar \varepsilon }_{k,i}}}}{{{\partial ^2}{r_i}}} = \int\nolimits_0^{ + \infty } {\mathop\mathbb{P}\left[ \gamma _{k,i} | \hat{\gamma} _{k,i}\right]\frac{{{\partial ^2}{\varepsilon _{k,i}}}}{{{\partial ^2}{r_i}}}d} {\gamma _{k,i}}.
\end{equation} 
 
%
As ${{\eta _k}} \le {{\rho^2 _k}}, k =1,2$,  the following inequality holds: ${r_i} < {\mathop\mathcal{C}\nolimits} \left( {\mathop {\min }\nolimits_{k = 1,2} \{ {\eta _k}{{\hat \gamma }_{k,i}}\} } \right) \le {\mathop\mathcal{C}\nolimits} \left( {\mathop {\min }\nolimits_{k = 1,2} \{ {\rho^2 _k}{{\hat \gamma }_{k,i}}\} } \right)$. Hence, ${{\mathcal{C}}({\gamma _{k,i}}) - {r_i}}>0 $ and therefore $\frac{{{\partial ^2}{\varepsilon _{k,i}}}}{{{\partial ^2}{r_i}}} >0$   during the intervals ${\gamma _{k,i}} \in [{{\eta _k}{{\hat \gamma }_{k,i}}}, {\rho^2 _k}{{\hat \gamma }_{k,i}} )$ and ${\gamma _{k,i}} \in [{\rho^2 _k}{{\hat \gamma }_{k,i}} + \infty )$.
As $\mathop\mathbb{P}\left[ \gamma _{k,i} | \hat{\gamma} _{k,i}\right]>0$, hence we have: $\int\nolimits_{{\eta _k}{{\hat \gamma }_{k,i}}}^{{\rho^2 _k}{{\hat \gamma }_{k,i}}} {\mathop\mathbb{P}\left[ \gamma _{k,i} | \hat{\gamma} _{k,i}\right]\frac{{{\partial ^2}{\varepsilon _{k,i}}}}{{{\partial ^2}{r_i}}}d} {\gamma _{k,i}} > 0$ and $\int\nolimits_{{\rho^2 _k}{{\hat \gamma }_{k,i}}}^{ + \infty } {\mathop\mathbb{P}\left[ \gamma _{k,i} | \hat{\gamma} _{k,i}\right]\frac{{{\partial ^2}{\varepsilon _{k,i}}}}{{{\partial ^2}{r_i}}}d} {\gamma _{k,i}} > 0$.

Considering equation~\eqref{eq:Relationship_half_probability}, we have:
 \begin{equation}
\label{eq:20}
\begin{split}
\!\!&\int\nolimits_{{\rho^2 _k}{{\hat \gamma }_{k,i}}}^{ + \infty } \!\!\!\!\!{\mathop\mathbb{P}\left[ \gamma _{k,i} | \hat{\gamma} _{k,i}\right]\frac{{{\partial ^2}{\varepsilon _{k,i}}}}{{{\partial ^2}{r_i}}}d} {\gamma _{k,i}} \\
 = &\int\nolimits_{{\rho^2 _k}{{\hat \gamma }_{k,i}}}^{ + \infty } \! \!\!\!\!\!{\mathop\mathbb{P}\left[ \gamma _{k,i} | \hat{\gamma} _{k,i}\right]\frac{{{m^{\frac{3}{2}}}\left( {{\mathcal{C}}({\gamma _{k,i}}) \!- \!{r_i}} \right){e^{ - \frac{{m{{( {{\mathcal{C}}({\gamma _{k,i}}) \!-\! {r_i}} )}^2}}}{{2( {1 - {2^{ - 2{\mathcal{C}}({\gamma _{k,i}})}}} ){{\left( {{{\log }_2}e} \right)}^2}}}}}}}{{\sqrt {2\pi } {{( {1 \!-\! {2^{ - 2{\mathcal{C}}({\gamma _{k,i}})}}} )}^{\frac{3}{2}}}{{\left( {{{\log }_2}e} \right)}^3}}}d} {\gamma _{k,i}}\\
 >& \int\nolimits_0^{{\rho^2 _k}{{\hat \gamma }_{k,i}}}  \!\!\!\!\!\!\!{\mathop\mathbb{P}\left[ \gamma _{k,i} | \hat{\gamma} _{k,i}\right]\frac{{{m^{\frac{3}{2}}}\left| {{\mathcal{C}}({\gamma _{k,i}}) \!-\! {r_i}} \right|{e^{ - \frac{{m{{( {{\mathcal{C}}({\gamma _{k,i}}) \!- \!{r_i}} )}^2}}}{{2( {1 - {2^{ - 2{\mathcal{C}}({\gamma _{k,i}})}}} ){{\left( {{{\log }_2}e} \right)}^2}}}}}}}{{\sqrt {2\pi } {{( {1 \!- \!{2^{ - 2{\mathcal{C}}({\gamma _{k,i}})}}} )}^{\frac{3}{2}}}{{\left( {{{\log }_2}e} \right)}^3}}}d} {\gamma _{k,i}}\\
 >& \int\nolimits_0^{{\eta _k}{{\hat \gamma }_{k,i}}} \!\!\!\!\!\!\!\! {\mathop\mathbb{P}\left[ \gamma _{k,i} | \hat{\gamma} _{k,i}\right]\frac{{{m^{\frac{3}{2}}}\left| {{\mathcal{C}}({\gamma _{k,i}}) \!- \!{r_i}} \right|{e^{ - \frac{{m{{( {{\mathcal{C}}({\gamma _{k,i}}) \!- \!{r_i}} )}^2}}}{{2( {1 \!- \!{2^{ - 2{\mathcal{C}}({\gamma _{k,i}})}}} ){{\left( {{{\log }_2}e} \right)}^2}}}}}}}{{\sqrt {2\pi } {{( {1 - {2^{ - 2{\mathcal{C}}({\gamma _{k,i}})}}} )}^{\frac{3}{2}}}{{\left( {{{\log }_2}e} \right)}^3}}}d} {\gamma _{k,i}}\\
 =& \left| {\int\nolimits_0^{{\eta _k}{{\hat \gamma }_{k,i}}} \!\!\!\!\!\! {\mathop\mathbb{P}\left[ \gamma _{k,i} | \hat{\gamma} _{k,i}\right]\frac{{{\partial ^2}{\varepsilon _{k,i}}}}{{{\partial ^2}{r_i}}}d} {\gamma _{k,i}}} \right|.
\end{split}
\end{equation}

So far, it has been shown that 
 \begin{equation}
\label{eq:21}
\begin{split}
\frac{{{\partial ^2}{{\bar \varepsilon }_{k,i}}}}{{{\partial ^2}{r_i}}}   =& \int\nolimits_0^{ + \infty }      {\mathop\mathbb{P}\left[ \gamma _{k,i} | \hat{\gamma} _{k,i}\right]\frac{{{\partial ^2}{\varepsilon _{k,i}}}}{{{\partial ^2}{r_i}}}d} {\gamma _{k,i}} \\ 
\ge &   \int\nolimits_{{\eta _k}{{\hat \gamma }_{k,i}}}^{{\rho^2 _k}{{\hat \gamma }_{k,i}}}    {\mathop\mathbb{P}\left[ \gamma _{k,i} | \hat{\gamma} _{k,i}\right]\frac{{{\partial ^2}{\varepsilon _{k,i}}}}{{{\partial ^2}{r_i}}}d} {\gamma _{k,i}} \\
&+ \int\nolimits_{{\rho^2 _k}{{\hat \gamma }_{k,i}}}^{ + \infty }   {\mathop\mathbb{P}\left[ \gamma _{k,i} | \hat{\gamma} _{k,i}\right]\frac{{{\partial ^2}{\varepsilon _{k,i}}}}{{{\partial ^2}{r_i}}}d} {\gamma _{k,i}}\\
 &- \left| {\int\nolimits_0^{{\eta _k}{{\hat \gamma }_{k,i}}}      {\mathop\mathbb{P}\left[ \gamma _{k,i} | \hat{\gamma} _{k,i}\right]\frac{{{\partial ^2}{\varepsilon _{k,i}}}}{{{\partial ^2}{r_i}}}d} {\gamma _{k,i}}} \right|\\
 >& 0  \ .
\end{split}
\end{equation}

According to~\eqref{eq:single_link_error_pro}, the error probability of a link is higher than 0.5 only if the coding rate is higher than the Shannon capacity.
Recall that  the coding rate chosen by the source satisfies ${r_i} < {\mathop{\mathcal{C}}\nolimits} \left( {\mathop {\min }\limits_{k = 1,2} \{ {\eta _k}{{\hat \gamma }_{k,i}}\} } \right) \le {\mathop{\mathcal{C}}\nolimits} \left( {\mathop {\min }\limits_{k = 1,2} \{ {\rho^2 _k}{{\hat \gamma }_{k,i}}\} } \right)$.  This makes the expected error probability of each single link (during frame $i$) be lower than 0.5, i.e., ${{{\bar \varepsilon }_{k,i}}}<0.5, k=1,2$. 
Based on~\eqref{eq: FBL throughput_second_div}, we have:
\begin{equation}
\begin{split}
\frac{{{\partial ^2}{\mathcal C_{{\rm{FBL}}}}}}{{{\partial ^2}{r_i}}} 
&<  - \frac{{{\partial ^2}{{\bar \varepsilon }_{1,i}}}}{{{\partial ^2}{r_i}}}\frac{{{r_i}}}{4} - \frac{{{\partial ^2}{{\bar \varepsilon }_{2,i}}}}{{{\partial ^2}{r_i}}}\frac{{{r_i}}}{4} + \frac{{\partial {{\bar \varepsilon }_{1,i}}}}{{\partial {r_i}}}\frac{{\partial {{\bar \varepsilon }_{2,i}}}}{{\partial {r_i}}}{r_i}\\
&< \left( {2\frac{{\partial {{\bar \varepsilon }_{1,i}}}}{{\partial {r_i}}}\frac{{\partial {{\bar \varepsilon }_{2,i}}}}{{\partial {r_i}}} - \frac{{{\partial ^2}{{\bar \varepsilon }_{1,i}}}}{{{\partial ^2}{r_i}}} - \frac{{{\partial ^2}{{\bar \varepsilon }_{2,i}}}}{{{\partial ^2}{r_i}}}} \right)\frac{{{r_i}}}{2}\\
&\le \left( {2\frac{{\partial {{\bar \varepsilon }_{1,i}}}}{{\partial {r_i}}}\frac{{\partial {{\bar \varepsilon }_{2,i}}}}{{\partial {r_i}}} - \sqrt {\frac{{{\partial ^2}{{\bar \varepsilon }_{1,i}}}}{{{\partial ^2}{r_i}}}\frac{{{\partial ^2}{{\bar \varepsilon }_{2,i}}}}{{{\partial ^2}{r_i}}}} } \right)\frac{{{r_i}}}{2} \ .
\end{split}
\end{equation}
Hence, $\frac{{{\partial ^2}{\mathcal C_{{\rm{FBL}}}}}}{{{\partial ^2}{r_i}}}<0$ if $\frac{{{\partial ^2}{{\bar \varepsilon }_{k,i}}}}{{{\partial ^2}{r_i}}} - 4{\left( {\frac{{\partial {{\bar \varepsilon }_{k,i}}}}{{\partial {r_i}}}} \right)^2}>0$.

According to the Cauchy–Schwarz inequality, we have ${\left( {\frac{{\partial {{\bar \varepsilon }_{k,i}}}}{{\partial {r_i}}}} \right)^2} = \{ \int\nolimits_0^{ + \infty } {\mathop
\mathbb{P}\left[ \gamma _{k,i} | \hat{\gamma} _{k,i}\right]  \frac{{\partial {\varepsilon _{k,i}}}}{{\partial {r_i}}}d} {\gamma _{k,i}} \}^2 
\le    \!\!\!\int\nolimits_0^{ + \infty }   \!{\mathop \mathbb{P}\left[ \gamma _{k,i}   \!| \hat{\gamma} _{k,i}\right]^2   \!(\!\frac{{\partial {\varepsilon _{k,i}}}}{{\partial {r_i}}}})^2   \!d{\gamma _{k,i}}
  \!<  \!  \int\nolimits_0^{ + \infty }   \!{\mathop \mathbb{P}\left[ \gamma _{k,i}   \!| \hat{\gamma} _{k,i}\right]   \!(\!\frac{{\partial {\varepsilon _{k,i}}}}{{\partial {r_i}}}})^2   \!d{\gamma _{k,i}}.$

 Hence, we have:

$\frac{{{\partial ^2}{{\bar \varepsilon }_{k,i}}}}{{{\partial ^2}{r_i}}} - 4 \left( {\frac{{\partial {{\bar \varepsilon }_{k,i}}}}{{\partial {r_i}}}} \right)^2 \\
> \int\nolimits_0^{ + \infty } {\mathop\mathbb{P}\left[ \gamma _{k,i} | \hat{\gamma} _{k,i}\right][ \frac{{{\partial ^2}{\varepsilon _{k,i}}}}{{{\partial ^2}{r_i}}}-4 (\frac{\partial \varepsilon _{k,i}}{\partial {r_i}})^2 ]} d{\gamma _{k,i}}\\
= \int\nolimits_0^{ + \infty }\frac {m \cdot e^A} {\sqrt {2\pi }  ({\log }_2 e)^2  \left( {1 - {2^{ - 2{\mathcal{C}}({\gamma _{k,i}})}}} \right) } \cdot {\mathop\mathbb{P}\left[ \gamma _{k,i} | \hat{\gamma} _{k,i}\right] B} d{\gamma _{k,i}}
$, 

where 
$A= {( - \frac{{m{{\left( {{\mathcal{C}}({\gamma _{k,i}}) - r_i} \right)}^2}}}{{2( {1 - {2^{ - 2{\mathcal{C}}({\gamma _{k,i}})}}} ){{\left( {{{\log }_2}e} \right)}^2}}})  }$ and $B=   \frac { m^{\frac{1}{2}} \left( {{\mathcal{C}}({\gamma _{k,i}}) - r_i} \right) } { {{\left( {1 - {2^{ - 2{\mathcal{C}}({\gamma _{k,i}})}}} \right)}^{\frac{1}{2}}}{{\log }_2}e}
- \frac {4 e^A}{\sqrt {2\pi }}    <   \frac { m^{\frac{1}{2}} \left( {{\mathcal{C}}({\gamma _{k,i}}) - r_i} \right) -2{\text {ln}} 2 } { {{\left( {1 - {2^{ - 2{\mathcal{C}}({\gamma _{k,i}})}}} \right)}^{\frac{1}{2}}}{{\log }_2}e} $. There exists a positive constant $t$,   which makes $B\le\frac { t \cdot m^{\frac{1}{2}} \left( {{\mathcal{C}}({\gamma _{k,i}}) - r_i} \right)  } { {{\left( {1 - {2^{ - 2{\mathcal{C}}({\gamma _{k,i}})}}} \right)}^{\frac{1}{2}}}{{\log }_2}e}$.   
Same to the discussion in~\eqref{eq:20} and~\eqref{eq:21}, it holds that: 

\noindent$ \int\nolimits_0^{ + \infty }\frac {m \cdot e^A \cdot \mathop\mathbb{P}\left[ \gamma _{k,i} | \hat{\gamma} _{k,i}\right]} {\sqrt {2\pi }  ({\log }_2 e)^2  \left( {1 - {2^{ - 2{\mathcal{C}}({\gamma _{k,i}})}}} \right) }  {  \frac { t \cdot m^{\frac{1}{2}} \left( {{\mathcal{C}}({\gamma _{k,i}}) - r_i} \right)  } { {{\left( {1 - {2^{ - 2{\mathcal{C}}({\gamma _{k,i}})}}} \right)}^{\frac{1}{2}}}{{\log }_2}e}} d{\gamma _{k,i}}>0$. 
Hence,  $\frac{{{\partial ^2}{\mathcal C_{{\rm{FBL}}}}}}{{{\partial ^2}{r_i}}} <0$. As $ {\mu}_{{\rm{FBL}},i}  = {\mathcal C_{{\rm{FBL}}}}(r_i)$,  ${\mu}_{{\rm{FBL}},i} $ is concave in $r_i$.

\section{Proof of   Corollary 1}
The coding rate of frame $i$ is scheduled based on the outdated CSI of the bottleneck link, i.e., $\min\{\eta_{1,i}\hat\gamma_{1,i},\eta_{2,i}\hat\gamma_{2,i}\}$. According to~\eqref{eq:single_link_data_rate}, we know that $r_i$ is strictly increasing in  $\min\{\eta_{1,i}\hat\gamma_{1,i},\eta_{2,i}\hat\gamma_{2,i}\}$. 
Hence, when the source schedules the coding rate $r_i$, high values of $\eta_{1,i}$ and $\eta_{2,i}$ lead to a big  $r_i$. In other words, $r_i$ is monotonically increasing in $\eta_{k,i}, k =1,2$.
\\
$\Rightarrow$ $\forall$  $x_k <y_k$,  $x_k, y_k \in (0,\rho_k]$ and $  \lambda_k   \in [0,1]$, we have ${\left. r_i \right|_{\eta_{k,i} = x_k}} < {\left. r_i \right|_{\eta_{k,i} = \lambda_k x_k + \left( {1 - \lambda_k } \right)y_k}} < {\left. r_i \right|_{\eta_{k,i} = y_k}}$, where $ k = 1,2$.
\\
 Based on Proposition 1, ${\mathcal{C}_{{\rm{FBL}}}}$ is concave in $r_i$,
\\
$\Rightarrow$ $\min \left\{ {{\mathcal{C}_{{\text{BL}},i}}\left( {{{\left. r_i \right|}_{\eta_{k,i} = x_k}}} \right),{\mathcal{C}_{{\text{BL}},i}}\left( {{{\left. r_i \right|}_{\eta_{k,i} = y_k}}} \right)} \right\} \leqslant {\mathcal{C}_{{\text{BL}},i}}\left( {{{\left. r_i \right|}_{\eta = \lambda_k x_k + \left( {1 - \lambda } \right)y}}} \right)$.
\\
$\Rightarrow$  ${\mathcal{C}_{{\text{BL}},i}}$ is quasi-concave in $\eta_{k,i}$, where $k=1,2$ and $0 < \eta_{k,i}  \le \rho^2_k$.

\section{Proof of   Proposition 2}
According to the proof of Corollary~1,  $r_i$, $i=1,2,...,+\infty$ is monotonically increasing in $\eta_{k}, k =1,2$.

$\Rightarrow$ $\forall$  ${x_{k,i}} <{y_{k,i}}$,  $i=1,2,...,+\infty$, $x_{k,i}, y_{k,i} \in (0,\rho_k]$ and $  \lambda_{k,i}   \in [0,1]$, we have ${\left. r_i \right|_{\eta_{k} = x_{k,i}}} < {\left. r_i \right|_{\eta_{k} = \lambda_{k,i} x_{k,i} + \left( {1 - \lambda_{k,i} } \right)y_{k,i}}} < {\left. r_i \right|_{\eta_{k} = y_{k,i}}}$, where $ k = 1,2$.

As shown in Proposition~1, ${\mathcal{C}_{{\rm{FBL}}}}$ is concave in $r_i$. 
Hence, ${\mathcal{C}_{{\rm{FBL}}}} = \sum\limits_i {{\mathcal{C}_{{\rm{FBL}}}}} $ is concave in ${\bf{r}}=(r_1,r_2,...,r_i,...)$.

$\Rightarrow$ $\min \left\{ {\sum\limits_i {{\mathcal{C}_{{\rm{FBL}}}}} \left( {{{\left. {{r_i}} \right|}_{{\eta _k} = {x_{k,i}}}}} \right),\sum\limits_i {{\mathcal{C}_{{\rm{FBL}}}}} \left( {{{\left. {{r_i}} \right|}_{{\eta _k} = {y_{k,i}}}}} \right)} \right\} \le \sum\limits_i {{\mathcal{C}_{{\rm{FBL}}}}} \left( {{{\left. {{r_i}} \right|}_{{\eta _k} = \lambda_{k,i} {x_{k,i}} + \left( {1 - {\lambda _{k,i}}} \right){y_{k,i}}}}} \right)$.
\\
$\Rightarrow$  ${\mathcal{C}_{{\text{BL}}}}$ is quasi-concave in $\eta_{k}$, where $0 < \eta_{k}  \le \rho^2_k$ and $k=1,2$.

\bibliographystyle{IEEEtran}

\end{document}